\documentclass[%
 reprint,
 superscriptaddress, 
 amsmath,amssymb,
 aps,
 pra,
 floatfix,
]{revtex4-2}

\usepackage{graphicx}
\usepackage{dcolumn}
\usepackage{bm}
\usepackage{dsfont}
\usepackage[dvipsnames]{xcolor}
\usepackage{enumitem}
\usepackage{multirow}
\usepackage[colorlinks=true,citecolor=Cerulean,linkcolor=RubineRed,urlcolor=Cerulean]{hyperref}
\usepackage{array}
\usepackage{float}
\usepackage{csquotes}
\usepackage[T1]{fontenc}
\usepackage[normalem]{ulem}
\MakeOuterQuote{"}

\usepackage[textsize=tiny]{todonotes}

\definecolor{customcolorblue}{HTML}{4573ae}

\usepackage{mathtools}

\usepackage{algpseudocode}
\usepackage{algorithm}

\usepackage{tikz}
\usepackage{amsthm}
\usepackage{amssymb}
\newtheorem{theorem}{Theorem}

\newtheorem{lemma}[theorem]{Lemma}

\theoremstyle{definition}

\usepackage{braket}

\begin{document}

\title{Quantum Hamlets: Distributed Compilation of Large Algorithmic Graph States}

\author{Anthony Micciche}
\email{amicciche@umass.edu}
\affiliation{Manning College of Information and Computer Sciences, University of Massachusetts Amherst, 140 Governors Drive
Amherst, Massachusetts 01003, USA}

\author{Naphan Benchasattabuse}%
\affiliation{Graduate School of Media and Governance, Keio University, Fujisawa, Kanagawa 252-0882, Japan}%

\author{Andrew McGregor}
\affiliation{Manning College of Information and Computer Sciences, University of Massachusetts Amherst, 140 Governors Drive
Amherst, Massachusetts 01003, USA}

\author{Michal Hajdu\v{s}ek}%
\affiliation{Graduate School of Media and Governance, Keio University, Fujisawa, Kanagawa 252-0882, Japan}%

\author{Rodney Van Meter}%
\affiliation{Graduate School of Media and Governance, Keio University, Fujisawa, Kanagawa 252-0882, Japan}
\affiliation{Faculty of Environment and Information Studies, Keio University, 5322 Endo, Fujisawa, Kanagawa 252-0882, Japan}%

\author{Stefan Krastanov}
\affiliation{Manning College of Information and Computer Sciences, University of Massachusetts Amherst, 140 Governors Drive
Amherst, Massachusetts 01003, USA}

\date{\today}
 
\begin{abstract}
We investigate the problem of compiling the generation of graph states to arbitrarily many distributed homogeneous quantum processing units (QPUs), providing a scalable partitioning algorithm and graph state generation protocol to minimize the number of Bell pairs required. Current approaches focus on the naive metric of cut edges to estimate the quantum communication cost. We show that the problem of balanced $k$ graph partitioning, with the objective of minimizing the sizes of the maximum matchings between the partitions, leads to lower entanglement requirements across partitions. Our heuristic algorithm, BURY, partitions graph states to require fewer Bell pairs for generation than state-of-the-art $k$ partition algorithms. Furthermore, we show that BURY reduces the cut-rank of the partitions, demonstrating that the partitioning found by our algorithm is likely to minimize the Bell pair utilization of any future improved distributed graph state generation protocol. We also discuss how our methods apply to the dynamic case where the graph state generation and measurement are performed concurrently. Our compilation approach provides a scalable foundation for reducing quantum network overhead for distributed measurement-based quantum computation (MBQC), as well as any scheme where distributed graph state generation is desired.
\end{abstract}

\maketitle

\section{Introduction}
In recent years, it has become increasingly apparent that useful quantum computers will need to be distributed in some fashion in order to scale to practically large sizes \cite{path_to_qc}. With this comes the problem of distributed circuit compilation. That is, the problem of mapping the gates and addressed logical qubits of a hardware-agnostic quantum program or circuit to a constrained physical system. In particular, when compiling to a quantum-networked distributed system, the primary bottleneck will be the number of times that long-range entanglement is required between the QPUs \cite{optimized_dqc_comiler}. So much so that QPU-local two-qubit gates can be viewed as "free". Our work follows in the trend of wanting to reduce the number of long-range entangled states required. However, while the majority of existing work on distributed compilation has focused on the circuit model of quantum computation, instead we explore the measurement-based quantum computation (MBQC) model \cite{one_way,mbqc}. The main benefit of the MBQC model is that instead of performing unitary evolution as in the circuit-based model of computation (CBQC), one \textit{only} needs to generate a
graph state and then perform single qubit measurements to achieve universality \cite{one_way}. Because both are equivalent, whether MBQC or CBQC is advantageous will depend on hardware specific error rates and the differences between gate and measurement speeds for the specific hardware in question.
One potential downside of MBQC is that it can require more qubits. However, this can be addressed by preparing the graph state as it is being measured, as well as through optimized compilers to reduce the sizes of the required graph state to some compiled algorithmic graph states as in \cite{Jabalizer}. Despite the capability for dynamic generation of these graph states in MBQC, the spatial cost still grows with the path-width of the desired graph state \cite{Elman_2025}. Because pathwidth can grow even linearly with the input size (for a very simple example, $K_n$, a complete $n$ vertex graph), to achieve certain large scales of MBQC (especially on a "matter qubit" platform where qubit-number-per-module can be a limiting factor due to fabrication constraints) distributed quantum computing is inevitable. 

 In this work, we propose a scalable algorithm for partitioning an arbitrary graph state into an arbitrary number of pieces, each piece corresponding to a QPU. The goal of the algorithm is to minimize the number of long-range Bell pairs required to realize a single large graph state spanning a multitude of QPUs. To make progress on the larger problem of designing schemes for distributed MBQC, we address the problem of graph state partitioning in the case where the entire graph state is generated before measurement, although we discuss in Section~\ref{sec:dynamic_compilation} how our work could be expanded to the dynamic generation case. Because of this assumption, our work could also be applied to any regime where one would want to generate a distributed graph state.

In Section~\ref{sec:background}, we provide a brief review of the motivating literature and related works, as well as a mathematical background section to review how graph states can be manipulated and thought of from the perspective of standard graph algorithm design. We intend this section to be for the purpose of making this work more self-contained and accessible to scientists with only modest knowledge of quantum computing, assuming they are comfortable with graph theory. Following this, Section~\ref{sec:quantum_land} sets the stage upon which the rest of this work will be carried out, describing our assumptions, abstractions, and nomenclature.

Section~\ref{sec:protocol} describes protocols for generating an arbitrary graph state using only local CZ gates, $Y$ measurements, and long-range logical Bell pair generations. We discuss here why minimizing the number of edges cut fails (such as with an algorithm like Kernighan–Lin \cite{kl_algo}) for graph state partitioning, and why our generation protocol, \textit{vertex cover grafting} (VCG) better embodies the goal of minimizing entanglement between partitions. Our VCG protocol requires a number of Bell pairs equal to the sizes of the matchings across all partitions, and we find that algorithm design for this metric also reduces other measures of entanglement, namely cut rank (discussed in Appendix~\ref{sec:cutrank}). 

Our main contribution is discussed in Section~\ref{sec:algorithm}, where we introduce and describe our heuristic algorithm, BURY, that we develop to reduce the number of required long-range logical Bell pairs consumed by VCG, and moreover reduce entanglement entropy between partitions. Following this, Section~\ref{sec:results} compares our method to heuristically applied existing algorithms that were designed for minimizing the total number of edges cut by the $k$-partition. We show that our method outperforms nearly all existing methods on most graphs tested. We find only one other method (based on graph coarsening, METIS \cite{KARYPIS199896}) which even approaches the performance of our heuristic algorithm, beating us only on a few synthetic micro-benchmarks unrelated to real algorithmic graph states. While METIS has already been used for inspiration of circuit partitioning methods \cite{Burt2026multilevelframework}, seeing its performance as a graph coarsening based method performing well in our context is notable for future algorithm development. 

Section~\ref{sec:dynamic_compilation} provides preliminary discussion for applying our work to the situation of consuming the graph state while generating it (dynamic compilation), as well as potential future work on the topic. Finally, Section~\ref{sec:conclusion} continues the discussion of potential future work beyond dynamic compilation.
In the Appendix, plots showing that our method works just as well for reducing cut rank are presented.
Additionally, we provide a proof of the NP hardness of finding a balanced partition of a graph such that the maximum matching between partitions is minimized, a problem that our heuristic algorithm aims to solve.

\section{Background\label{sec:background}}
\subsection{Previous work}

Previous work has investigated compilation of algorithmic graph states \cite{Jabalizer}, and the problem of scheduling generation of these graph states on non-distributed hardware \cite{Sitong_2023} using a Game of Surface Codes \cite{Litinski2019gameofsurfacecodes} style abstraction.

As the spatial cost of generating graph states scales with its path width~\cite{Elman_2025}, even in the most optimized MBQC model of computation, in order to realize large algorithmic graph states, scheduling the generation of such states in a distributed setting is an essential and underexplored area of research.
An area that this work addresses directly, where we refer to as distributed compilation the problem of assigning vertices in a given graph state to different QPUs in order to minimize the number of Bell pairs the network needs to generate.

Other work has addressed various distributed graph state generation protocols \cite{Meignant2019, fischer_towsley, ji2025distributingarbitraryquantumgraph}, but does not consider the problem of vertex to QPU assignment for arbitrarily large graphs. Another issue with existing work on graph state generation is its reliance on non-scalable algorithms, such as using integer-linear programs \cite{Sharma_2026}, which have exponential running time with the problem size. The existing literature on partitioning graph states (equivalent to vertex to QPU assignment) is quite sparse, or tends to rely on application of algorithms that were designed with minimizing edge counts between partitions \cite{KokoriThesis}. We show that minimizing the number of edges is far from an ideal proxy for minimizing the number of required Bell pairs. Our algorithm minimizes maximum matching sizes and performs better in terms of both Bell pair consumption necessitated by our proposed graph state generation protocol, as well as reducing cut rank, a more universal measure of entanglement. 

During the preparation of this manuscript, a concurrent work addressed the bipartition problem via a heuristic simulated annealed algorithm \cite{pettersen2026bipartitioninggraphstatesdistributed} to minimize cut rank between the two partitions. However, the authors did not state how their work could be applied to the case of more than two QPUs, and relied on an existing general purpose heuristic algorithm to minimize Bell pair consumption. In the case of two QPUs, the number of Bell pairs needed to generate the graph state is equal to the cut rank~\cite{fattal2004entanglementstabilizerformalism}.
However, this is not true for an arbitrary number of partitions, where the cut rank represents a lower bound on the required number of Bell pairs~\cite{hein2006}, and one needs to specify the protocol that will be used to generate the graph state.

We address the shortcomings of previous works by designing a scalable $O(|V|^2+|V|\Delta^2)$ time (Section~\ref{sec:time_complexity}) heuristic framework tailored specifically to graph state partitioning, which can easily be applied to partitioning large graph states into any number of pieces. Furthermore, while we provide and analyze a greedy algorithm invocation of this framework, we expect it to later be adapted to other more nuanced paradigms as well, such as constrained networks.

\subsection{Mathematical background\label{sec:background_math}}
Throughout this work, we consider a graph $G = (V,E)$ with $n$ vertices, and $m$ edges. This graph corresponds to a quantum state, namely a graph state \cite{hein2006}, which can be thought of 
as each vertex representing a qubit in the $\ket{+}$ state, and each edge corresponds to performing a CZ between them. Formally, for a graph $G$, we can interpret it as a graph state $\ket{G}$ by
$$\ket{G} := \prod_{(i,j) \in E}CZ_{i,j}\ket{+}^{|V|}.$$

From a classical graph theory point of view, the main functional difference between a graph and a graph state is how they transform under the action of \textit{local complementation} (LC).
Local complementation takes a graph $G=(V,E)$ and produces a new graph $G'=(V,E')$.
Despite the difference in edge sets $E$ and $E'$, the corresponding graph states $\ket{G}$ and $\ket{G'}$ are related by single-qubit Clifford unitaries.
These unitary operations are far more simple to implement than two-qubit gates, especially when these two-qubits act on qubits in different QPUs.
We make the standard assumption that two-qubit gates within a single QPU incur negligible cost compared to inter-QPU two-qubit gates~\cite{optimized_dqc_comiler}.
This is because inter-QPU two-qubit gates rely on the consumption of Bell pairs which are currently the main bottleneck in distributed quantum computation~\cite{Kimble08,WehnerElkoussHanson_qnet2018,main2024distributed}.

Given a graph $G$, we denote LC of a graph vertex $a$ by $\tau_a(G)$ 
$\tau_a(G)$ means to take the induced subgraph formed from the neighborhood of $a$ (excluding $a$), remove all present edges, and add all edges between vertices that were not in $G$ before. In other words, it is the element-wise NOT operation on the adjacency matrix specified on only the exclusive neighborhood of $a$. 

Local complementation alone is a powerful tool; however, in this work, we will always pair it with use of additional qubits (vertices) and quantum measurement. Since a graph state corresponds to a physical system, we can intermediately use additional vertices to help realize it. Removal of a vertex $a$ in $G$ is called $Z$ measurement, $Z_a(G)$, and performing a local complementation on $a$ before the $Z$ measurement is $Y$ measurement, $Y_a(G)$. A simple example would be performing $Y$ measurement on any vertex in a path graph with $n$ vertices, $P_n$, will yield $P_{n-1}$. From this example, notice that the LC included in the $Y$ measurement keeps the graph connected, whereas doing a $Z$ measurement on any non-end vertex would yield a disconnected graph.

As this work considers partitioning graph states (see Figure~\ref{fig:hamlet_abstraction} for a brief abstraction), additional qubits using $Y$ measurements are useful for manipulating the structure of edges crossing the cut. An example of this is shown in Figure~\ref{fig:vcg}, our description of our graph state generation protocol (elaborated on in Section~\ref{sec:protocol}). This figure shows that using two additional qubits and $Y$ measuring them, we can generate a $K_{1,n-1}$ biclique with only a single edge crossing a partition \cite{fischer_towsley}. Moreover, this same idea can be applied to generate an arbitrary $K_{a,b}$ biclique, using additional LCs to clean up unwanted edges, but this will not be used in this work. 

\section{A Quantum Land\label{sec:quantum_land}}
In this work, we define the term "hamlet" to refer to a QPU, and the villagers who live in a given hamlet correspond to logical qubits. Each hamlet will also have at least one mayor, corresponding to the logical communication qubit in more common nomenclature. The mayors are connected to a quantum network and we assume that we can place mayors into Bell states. We assume intra-village operations to be free, and inter-village communication to be costly. In other words, villagers from one village must go through their mayor in order to "talk" to a villager in another hamlet, but may talk freely among themselves. A sketch of this abstraction is present in Figure~\ref{fig:hamlet_abstraction}. 

\begin{figure}
    \centering
    \includegraphics[width=\linewidth]{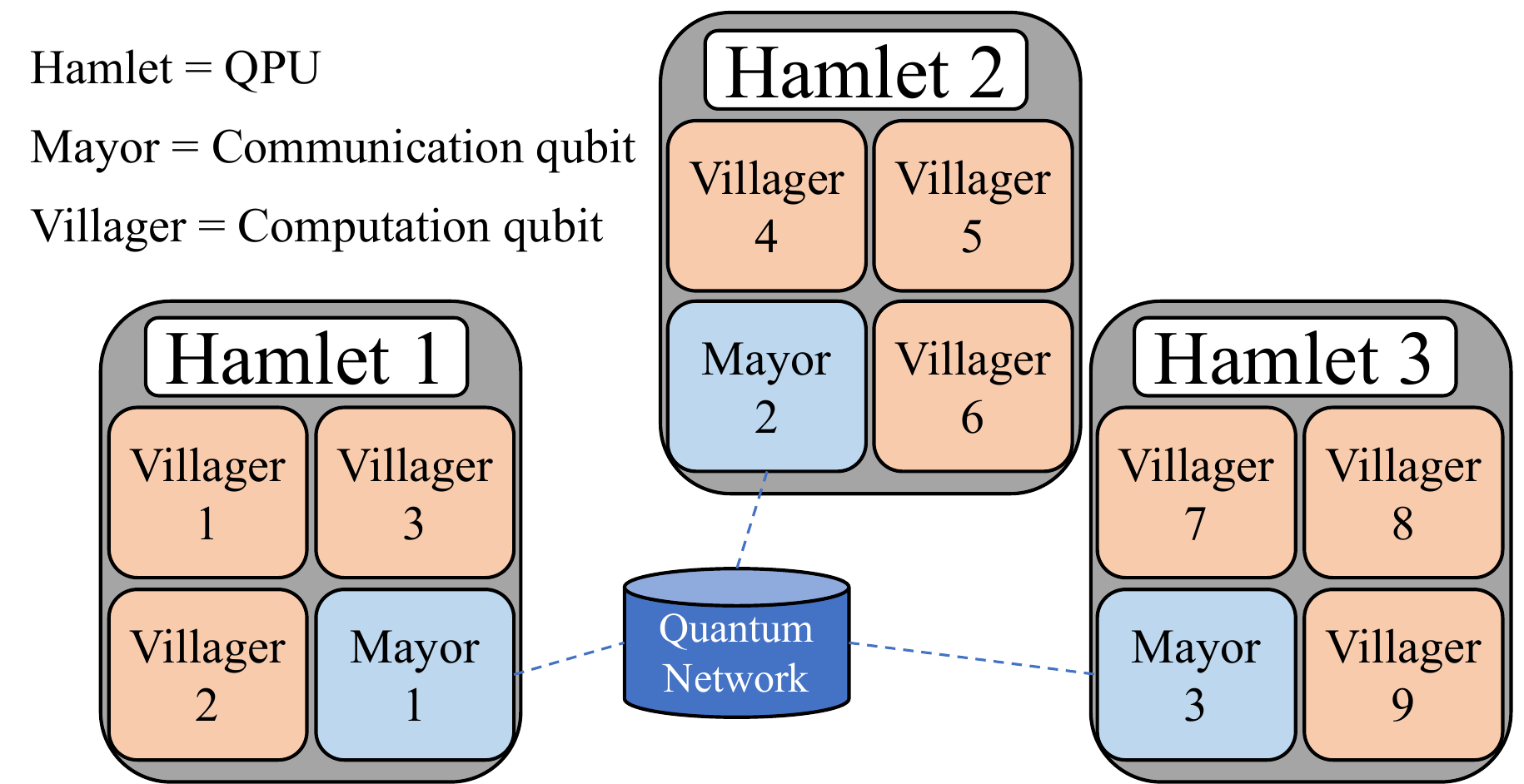}
    \caption{\textbf{Level of abstraction at which this work operates.} Each gray box in the figure corresponds to a different hamlet (QPU), in this example, each of which has three villagers (logical qubits), and one mayor (communication qubit). The mayor qubits are end nodes in some quantum network and therefore, for a cost, we can place two mayors into entangled state. We assume operations between two villagers within the same hamlet to be free, while an operation that involves an interaction between villagers in separate villages requires their mayors to be entangled.}
    \label{fig:hamlet_abstraction}
\end{figure}

When considering partitions of graph $G$, we use $k$ to denote the number of partitions being considered. We let $k$ correspondingly also refer to the number of hamlets, and $\frac{n}{k}$ is then the number of villagers per hamlet, as we assume each hamlet is the same size and, without loss of generality, assume that $k$ divides $n$ for conciseness. This divisibility assumption does not affect our algorithm we describe in Section~\ref{sec:algorithm}, as that only requires the sizes of the partitions as inputs. 

We address the problem of determining, for a certain number of these hamlets, and a provided arbitrary graph state, how to best \textit{assign} qubits (vertices) in the graph state to villagers in the hamlets model to minimize the number of times that in total we must entangle two mayors in order to generate the provided graph state. We call this problem \textbf{static compilation}, and with the exception of Section~\ref{sec:dynamic_compilation}, assume that all hamlets are homogeneous. Since all hamlets are equipped with the same resources, rather than thinking about qubit \textit{assignment}, we instead consider \textit{partitioning} the graph-state into $k$ pieces (when we have $k$ hamlets), under some metric that proxies the number of long range entang fements required. Graph state generation protocols and their associated metrics are discussed in the following section.

Our work formalizes the problem of generating distributed graph states with minimal entanglement resources in a way that enables us to use known graph partitioning algorithms. Moreover, we design a new partitioning algorithm in this manuscript that drastically lowers the inter-hamlet (i.e., cross-module) entanglement resources required to prepare a distributed graph state compared to all partition algorithms tested.

\section{Protocols for generating graph states\label{sec:protocol}}
In order to most efficiently partition a graph state, we will need to first define how the graph state will be generated for a provided partitioned graph state (e.g., define the "cost" of a partition). For example, if we will generate the graph state by only using $\ket{+}$ initializations and $CZ$ gates for all edges (using one Bell pair per inter-hamlet edge), then the problem is nothing more than balanced graph partitioning, with the traditional objective of minimizing cut edges.
This problem has been studied for decades \cite{kl_algo, saran_vazirani, andreev_racke_2004}, and is known to be NP-hard \cite{andreev_racke_2004}.
However, generating graph states edge by edge like this is extremely inefficient, so we instead provide a drastically more efficient protocol that is also quite simple. We call it \textit{vertex cover grafting} (VCG). Knowing how the graph state will be generated (i.e., with VCG) then allows us to address the design of algorithms for partitioning the graph state in order to minimize the cost of using the specified protocol. Moreover, although the heuristic we design in Section~\ref{sec:algorithm} aims to minimize Bell pair usage when generating graph states with VCG, we further show in Appendix~\ref{sec:cutrank} that our algorithm happens to also reduce universal measures of entanglement, namely cut rank. 

As alluded to in Section~\ref{sec:background_math}, an arbitrarily large biclique between two hamlets can be formed for the cost of a single long-range Bell pair. One may think that we could decompose the input graph into biclique components, and use those to generate the graph state (also assuming that the local complementation operations involved in such a protocol would not interfere, which is a non-trivial ask). This unfortunately is the bipartite dimension of the graph, and finding it has been proven to be NP-complete \cite{GareyJohnson1979}.

For guidance in the design of our heuristic, we instead look to the minimization of the cut rank, which is proportional to the rank of the adjacency matrix for the nodes in the graph state with edges crossing the partition \cite{hein2006}. During the finalization of this manuscript, this minimization was addressed directly by a simulated annealing heuristic algorithm \cite{pettersen2026bipartitioninggraphstatesdistributed}, however, was limited to the case of bisection ($k=2$) and did not discuss how that metric would correspond to a graph state generation protocol for values of $k$ greater than 2.

To provide a scalable protocol-algorithm pair for arbitrary values of $k$ that outperforms the state-of-the-art for edge minimization heuristics across multiple metrics, we start by describing our protocol, VCG, which has a more tractable graph algorithmic objective to visualize than cut rank. Specifically, as also stated in \cite{hein2006}, the minimum vertex cover will upper bound the cut rank, and finding a minimum vertex cover has a polynomial time solution if the graph is bipartite by Kőnig's theorem \cite{konigtheorm, hopcroft_karp}, which states that the minimum vertex cover is equal to the maximum matching in bipartite graphs. Furthermore, for sparse graphs (where sparsity in this context means that the number of edges in every induced subgraph is at most a constant larger than the number of nodes in the subgraph), the matching will roughly equal the rank \cite{konrad_et_al}. 

We can work with bipartite graphs by forming the edges between each pair of QPUs (hamlets) pair-by-pair, recalling that local operations are free, and the intra-QPU edges can be generated after the inter-QPU edges have been established. This assumption also implies that the quantum network being used only ever needs to be able to generate two-qubit inter-QPU GHZ states. 

\begin{figure*}
  \includegraphics[width=\textwidth]{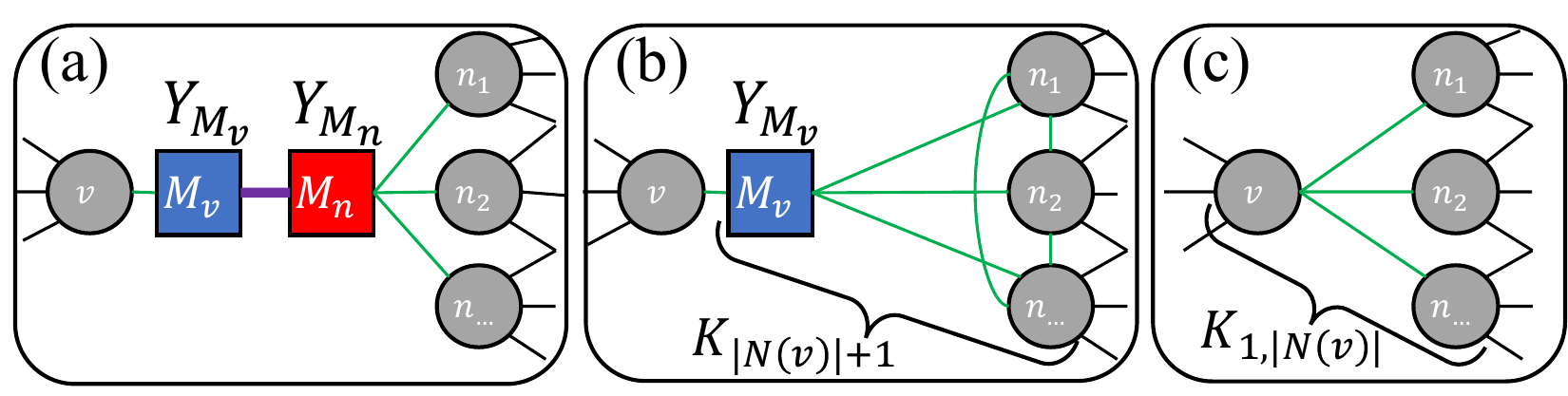}
  \caption{\textbf{Grafting edges to a node of the vertex cover during VCG.} When generating the edges between a vertex of the vertex cover $v$ and its neighbors $n_1,n_2,n_{...}$ on a separate hamlet, $v$ is coupled to its mayor $M_v$, and its neighbors are coupled to theirs, $M_n$. Long-range entanglement is used to entangle the two mayors. This is shown in \textbf{(a)}. \textbf{(b)} shows that Y measurement on $M_n$ creates a clique in the subgraph formed by $M_v$ and $n_1,n_2,n_{...}$. \textbf{(c)} shows that Y measurement on $M_v$ yields our desired result of generating edges between $v$ and its remote neighbors. Notice that this procedure does not affect any preexisting edges that have been generated adjacent to $v,n_1,n_2,n_{...}$}
  \label{fig:vcg}
\end{figure*}

\subsection{Vertex Covering Grafting (VCG)\label{sec:vcg}}
Here, we describe a protocol (VCG) for generating any arbitrary graph state on a collection of homogeneous hamlets (QPUs) by applying $Y$ measurements to the communication qubits (mayors). 
Importantly for intuition in heuristic algorithm design, this protocol uses a number of long range entanglements exactly equal to the size of the matchings between partitioned regions of the graph state. 
For VCG, we are given an already $k$-partitioned graph state, where each partition refers to the set of nodes that will coexist in the same hamlet. We assume one mayor (communication qubit) per hamlet. For ease of description, we consider constructing edges between pairs of hamlets one-by-one; however, parallelism is not prohibited by our approach. For each of the  ${k\choose 2}$ pairs of hamlets, if there are any edges between them, we do the following:

\textbf{1.) Remove intra-hamlet edges} Since we assume local operations to be free relative to the cost of entangling the mayors of two hamlets, edges internal to a hamlet are ignored and expected to be constructed by local operations after all inter-hamlets edges have been generated by VCG.

\textbf{2.) Calculate the minimum vertex cover} As we are only considering two hamlets at a time, ignoring local edges will result in a bipartite graph. This then allows for efficient computation of the minimum vertex cover of that graph by finding the maximum matching \cite{hopcroft_karp}. Once we have the vertex cover, we use one long range Bell pair per node in the cover. As discussed in Section~\ref{sec:background_math}, an $n$ vertex $K_{1,n-1}$ biclique can be generated over a cut for the cost of a single long range Bell pair (see Figure~\ref{fig:vcg}). Treating the nodes of the vertex cover as the roots of these depth 1 trees, we can generate all edges in the graph for a number of Bell pairs equal to the size of the cover. This process is then repeated for all remaining hamlet pairs.

Thus, using this protocol, algorithms that partition graph states to minimize the matching sizes between partitions will directly reduce the number of required long-range Bell pairs. Furthermore, the maximum matching across a cut is by definition a subset of edges of those edges, and therefore is never larger than the total number of cut edges. Moreover, in a bipartite graph $G(V,E)$ with independent vertex sets $A$ and $B$ and $V = A \cup B$, a maximum matching cannot include more edges than $\min(|A|,|B|)$.

\section{Algorithm for Quantum Villager Placement \label{sec:algorithm}}
Here we present a heuristic algorithm for partitioning an arbitrary graph into $k$ pieces, such that the sum of the sizes of the maximum matchings between each pair of partitions is minimized. We show in Appendix ~\ref{sec:npcompleteness}, the problem of determining whether or not a partition of a given graph that yields a provided matching size exists is NP complete, thus motivating our design of a heuristic algorithm. Solving this problem directly reduces the number of Bell pairs required by VCG, as well as indirectly minimizing the cut rank (refer to Appendix~\ref{sec:cutrank}).

\begin{algorithm}[H]
\caption{BURY for Minimizing Matching Cuts}\label{alg:bury_heuristic}
\begin{algorithmic}[H]
    \Require An arbitrary graph $G= (V,E)$ and a resource amount $r_0$, corresponding to the graph state to be partitioned and the number of vertices a partition requires. 
    \Ensure A subset of the vertices of $G$. This subset corresponds to which qubits of the graph state should be on the same processor. \\
    
    \State For vertices $v_1, v_2, ..., v_n$ in $V$, initialize weights $\vec{w} =w_1, w_2,..., w_n$ such that $w_i \gets \deg(v_i)+1$, integer $r=r_0$, and set $C = \emptyset$ to track which nodes have been colored. \\

    \State Let $N(v)$ refer to the neighborhood of $v$.
    \While{$r \neq 0$} 
        \If{$\min(\vec{w})$ < $r$}
            \State $i \gets \text{argmin}(\vec{w})$ 
            \State $v \gets v_i\in V$ 
            \State $r \gets r - w[v]$
            \For{$u \in N(v)\cup v$} 
                \If{$u \notin C$}
                    \State $C \gets C\cup u$ 
                    \State $w[u] \gets w[u] -1$
                    \For{$u' \in N(u)$}
                        \State $w[u'] \gets w[u'] -1$
                    \EndFor
                \EndIf
            \EndFor
            \State $w[v] \gets \infty$
        \Else
            \State Arbitrarily color an uncolored node. This happens when no bury-able node exists. 
            \State $r \gets r -1$
        \EndIf
    \EndWhile
        \State 
    \Return  $C$
\end{algorithmic}
\end{algorithm}

\newpage
Formally, we wish to solve:\\

\textbf{Balanced Minimum Maximum Matching $k$-Partition}: Given a graph $G$, find the balanced $k$-partition of its vertices such that the sum of the sizes of the maximum matchings between partitioned regions (edges within a partitioned region are ignored) is minimized.

More concretely and for ease of discussion, we use the phrase  "coloring a graph" in a way that is nonstandard. That is, we say that given an arbitrary graph $G$ with $n$ vertices and a positive integer $k$, we want to find the coloring such that:

\begin{itemize}
    \item $k$ colors are used as equally as possible throughout the graph. The loosening of this assumption is discussed in Section~\ref{sec:dynamic_compilation}. Each vertex is required to be colored.
    \item Edges incident on nodes of the same color will be deleted, as these correspond to operations local to a QPU, and do not require a quantum network. After all of a hamlet's inter-QPU edges are generated, then these "deleted" edges will be generated by operations local to the hamlet. Temporarily not considering these edges gives us a $k$ partite graph.
    \item The sum of the sizes of the maximum matchings between each pair ($k\choose2$ total) of colors should be minimized after the edge deletion step.  
\end{itemize}

We start by addressing the case where $k=2$. We call our heuristic the "bury heuristic" or BURY, and it is based upon the idea that if a vertex and all of its neighbors are the same color, that vertex will never be able to participate in any matching. Correspondingly, we say that "burying" a vertex $v$ means assigning $v$ and its neighborhood $N(v)$ the same color. To do this, we assign weights to each vertex in the graph equal to the number of nodes required to bury it. For an instance of the heuristic, an amount of resources $r$ (number of nodes that one can color) also needs to be specified. At the beginning of execution of the algorithm, $r$ is usually initialized to $\frac{n}{k}$, as we are naturally aiming to color $\frac{n}{k}$ vertices in the graph. Using our BURY framework, a straightforward heuristic approach is to, at each iteration, bury the vertex with the lowest cost, each time subtracting that cost from the remaining resource value. When no unburied vertices remain with a bury cost less than $r$. Arbitrary nodes can be colored, each time reducing $r$ by 1. In practice, we suspect that sometimes we will find better partitions if these "arbitrary" nodes are adjacent to a node that has already been colored. The greedy heuristic application of the "bury" idea is further detailed in Algorithm~\ref{alg:bury_heuristic}, and an example first iteration on a grid graph is presented in Figure~\ref{fig:step_1}. An implementation in the Julia programming language is available on GitHub, as QuantumHamlets.jl \footnote{\url{https://github.com/QuantumSavory/QuantumHamlets.jl}}~\cite{micciche_2026_18459303}.

\begin{figure}
    \centering
    \includegraphics[width=\linewidth]{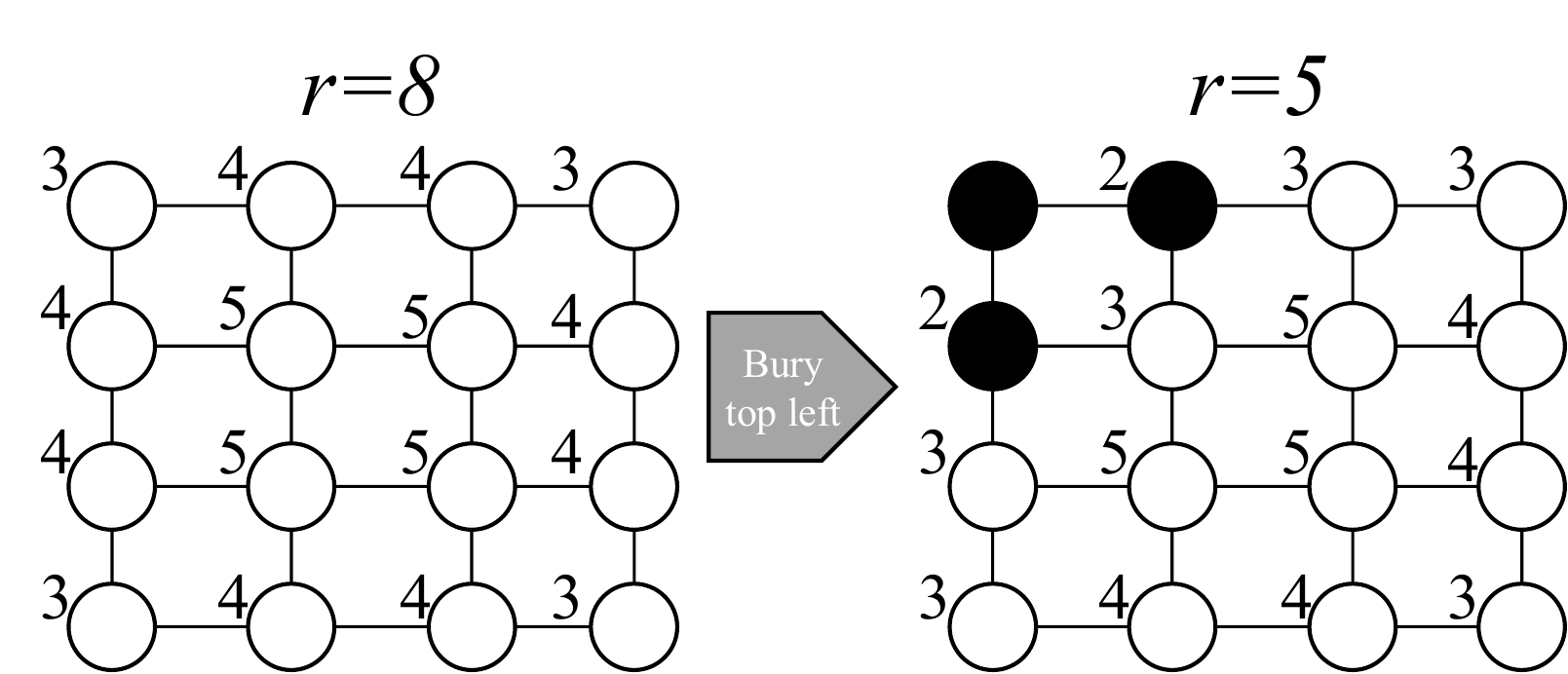}
    \caption{\textbf{Example first step of the bury heuristic algorithm on a $4 \times4$ grid graph.} On the left, the weights of the nodes are initialized to be equal to their degree plus one. The right shows the state of the algorithm after the first iteration on this graph. The top-left vertex has been chosen due to it's minimum value, and has been "buried". That is, it and its neighbors have been decided to be included in the same partition, or colored black as shown here. The costs to bury other nodes has been updated as well. Note that although the neighbors of the chosen node have been colored, they have not been buried yet, and therefore can be chosen in further iterations of the algorithm.}
    \label{fig:step_1}
\end{figure}

To generalize this "bury heuristic" approach to $k$ partition, we simply run Algorithm ~\ref{alg:bury_heuristic} with $r=\frac{n}{k}$, then take the remaining uncolored subgraph, and run the algorithm again. Repeat until only $\frac{n}{k}$ uncolored vertices remain. At this point, the graph has been partitioned into $k$ pieces. 

It is also worth noting that the framing of burying nodes could very well be utilized in more sophisticated algorithms. Moreover, for every iteration the algorithm takes, the largest possible resulting matching decreases. Being able to make statements about the number of steps that can be taken before $r$ reaches zero might be helpful in computing a performance bound in future work. In addition, the flexibility of our heuristic framework is also applicable to the case of dynamic compilation (Section~\ref{sec:dynamic_compilation}), and we hope to see it expanded beyond the greedy manifestation depicted in Algorithm~\ref{alg:bury_heuristic}, to address more complicated settings such as constrained networks in future work. 

\subsection{Time Complexity of BURY\label{sec:time_complexity}}
Referring to Algorithm~\ref{alg:bury_heuristic}, we can establish an upper bound on the time complexity. First, we notice that initialization takes $O(|V|^2)$ time, where $V$ is the number of vertices in $G$. This is because for each vertex, we must calculate its degree. After this, the main loop will run until $r_0$ reaches $0$. In practice, this will often be roughly $r_0$ divided by the average degree in the graph (unless the minimum degree is larger than $r_0$), however, as an upper bound, $r_0$ will always decrement by at least 1 every iteration, and the largest value $r_0$ can take is $\frac{|V|}{k}$. If we are partitioning the graph into $k$ pieces this will run $k$ times, as we simply need to remove colored vertices from the graph and then reset $r_0$ each time. Within this main loop, we find the minimum weight, taking $O(|V|)$ time. Then we look at the neighbors of a chosen vertex, and there are at most $\Delta$, where $\Delta$ is the maximum degree in the graph.  We also update the neighborhoods of those neighbors as well, giving O($\Delta^2)$ time. Therefore, regardless of $k$, the time complexity of BURY is:

$$O\left(|V|^2+ k\frac{|V|}{k} \left(|V| + \Delta^2 \right) \right) = O\left(|V|^2 + |V|\Delta^2\right)$$

Note that as long as $\Delta\leq\sqrt{|V|}$, this is $O(|V|^2)$.  

\section{Results\label{sec:results}}
Here we present results comparing our BURY algorithm to standard algorithms designed for partitioning graphs for edge minimization. As our work is the first attempt, to our knowledge, to address the $k$-partitioning problem where matchings between regions is the objective to be minimized rather than total edges cut, for comparison methods, we can only apply methods that were intended for minimizing the number of edges. Recall that not only is the matching size a much better upper bound on the  entanglement entropy \cite{hein2006}, but also the protocol for graph state generation we presented in Section~\ref{sec:vcg} requires exactly a number of Bell pairs equal to the sum of the matching sizes.

Out of these classical methods for edge-minimization $k$ partition, we find that METIS \cite{metis} far outperforms the other classical methods and on some classes of graphs, such as 3-regular random graphs, outperforms our heuristic, BURY (although BURY outperforms METIS on $m$ regular graphs when $m \geq 4$). Notably, BURY has considerably the best performance of all methods tested, including METIS, on partitioning MBQC compiled graph states of quantum approximate optimization algorithm (QAOA) \cite{farhi2014quantumapproximateoptimizationalgorithm} circuits corresponding to a single iteration of MAX-CUT on a complete graph. The compiled graph states were generated using Graphix \cite{graphix}.

\subsection{Methods \label{sec:methods}}
Here we enumerate the methods we apply to partitioning a provided graph into $k$ pieces. We evaluate them on how well they reduce the sizes of the maximum matchings between partitions which directly corresponds to the number of inter-QPU Bell pairs required by our protocol VCG. Furthermore, we find that this metric also corresponds to minimizing the cut rank, and is discussed further in Appendix~\ref{sec:cutrank}.

\begin{figure}
    \centering
    \includegraphics[width=\linewidth]{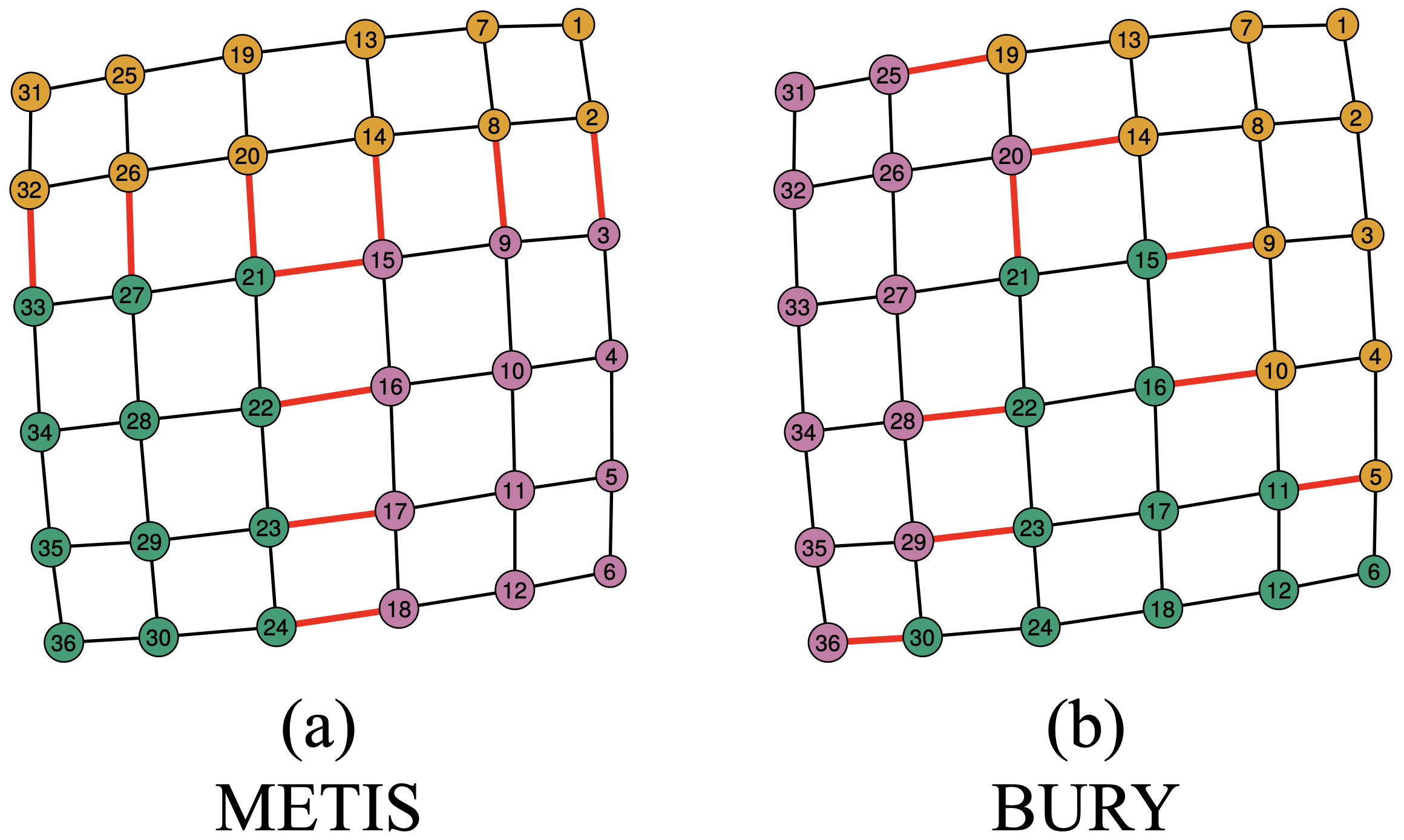}
    \caption{\textbf{Example of performance of METIS and BURY on a $6\times6$ square grid graph, partitioned into three hamlets. (a)} METIS partitions in a way that causes a maximum matching of size 4 between green and pink, 3 between orange and green, and 3 between orange and pink. For a total of 10 Bell pairs when generated by VCG. \textbf{(b)} Our BURY heuristic partitions such that 4 Bell pairs are required between green and pink, 2 between pink and orange, and 3 between orange and green. For a total of 9.}
    \label{fig:grid_ex}
\end{figure}

\begin{itemize}
    \item \textbf{Saran-Vazirani}
    One of the earlier approximation algorithms ($\frac{n}{2}$--approximation) for balanced graph bi-partition \cite{saran_vazirani}. This algorithm can be applied to the balanced $k$ partition problem, by exponential scaling with $k$. For this reason, as well as due to relative poor performance, it is not included in plots of larger graphs. 
    \item \textbf{Random Sampling \textit{N}}
    As a comparison method and baseline, we find the best partition on the metric of matching sizes out of $N$ random partitions.
    \item \textbf{Kernighan–Lin (KL)} \cite{kl_algo} A heuristic algorithm often used for graph partitioning, especially for graph bisection. 
    \item \textbf{METIS} \cite{metis} A state-of-the-art graph partitioning software based on graph coarsening \cite{KARYPIS199896}, designed for unstructured graph partitioning. 
    \item \textbf{BURY}
    Implementation of our heuristic algorithm discussed in Section~\ref{sec:algorithm}, and presented in Algorithm~\ref{alg:bury_heuristic}.
    \item \textbf{BURY - SEEDING}
    A modification of BURY. In addition to the input of BURY, a specific vertex is specified as the "seed" and this seed node starts colored. When selecting to color nodes of minimum weight, this heuristic only looks at nodes adjacent to nodes that have already been colored, whereas BURY chooses a node of minimum weight over the entire graph, regardless if it is adjacent to an already colored node or not.
\end{itemize}

    \begin{figure}
        \centering
        \includegraphics[width=\linewidth]{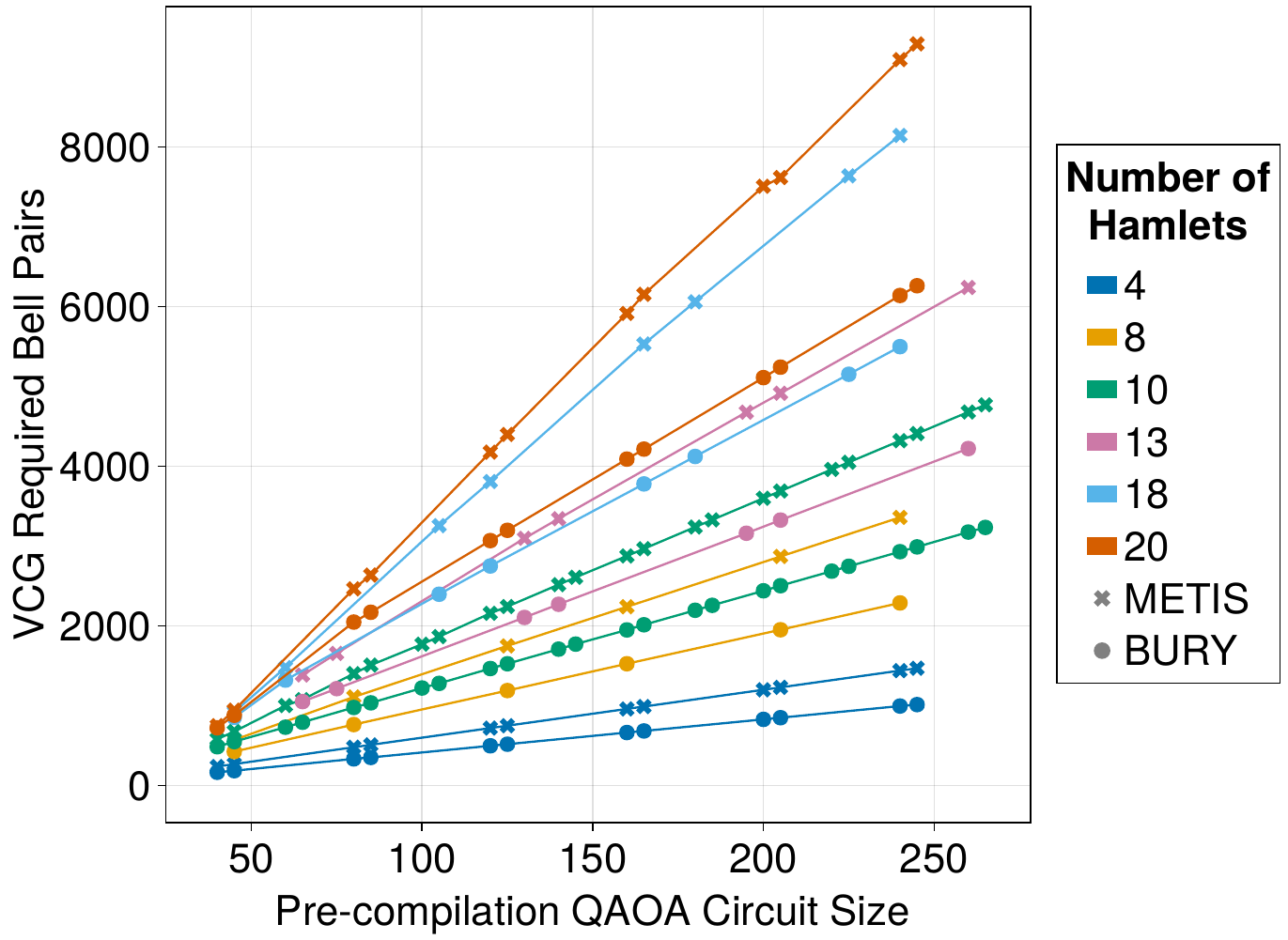}
        \caption{\textbf{Performance of METIS and BURY on compiled QAOA graph states} The $y$-axis corresponds to the number of Bell pairs required to generate the distributed graph state using our VCG protocol. This value is also exactly equal to the sum of the maximum matching sizes between all partitions. The $x$-axis corresponds to the size of the QAOA circuit the MBQC graph state was compiled from. The colors correspond to varying the number of hamlets while the keeping the number of total logical qubits across all hamlets constant for a given $x$-value. The marker type distinguishes our BURY heuristic from the METIS algorithm.}
        \label{fig:qaoa}
    \end{figure}
\subsection{Evaluated Graphs}
Here we enumerate and discuss several of the graphs on which we tested the above methods:
\begin{itemize}
    \item \textbf{Compiled QAOA graph states}
    Using the Python package Graphix \cite{graphix}, we generate the compiled graph state corresponding to single iteration QAOA circuits for MAX-CUT on a complete graph. This is presented in Figure~\ref{fig:qaoa}. Note that the $x$ axis is the number of qubits in the original QAOA circuit. After converting these circuits to the MBQC model and then optimizing those graph states with Graphix, the number of vertices in the graph is considerably greater than the value on the $x$ axis. For example, a QAOA circuit with 5 qubits compiles to a 20 vertex graph state, while 230 qubits compiles to a 26,795 vertex graph state.
    
    As shown in Figure~\ref{fig:qaoa}, our heuristic algorithm performs better at partitioning graph states for generation by our VCG protocol than METIS, and this gain in performance is exacerbated at larger graph state sizes and at larger numbers of partitions (more hamlets/QPUs). This performance difference also holds when evaluated on cut rank, as all graphs tested do (Figure~\ref{fig:qaoa_cutrank}).

    \item \textbf{Grid graphs} Grid graphs refer to the square lattice graphs. For vertex sizes that are not perfect squares we choose the length and width dimensions to be as close to equal as possible. Refer to Figure~\ref{fig:grid_ex} for an example showing the difference between METIS and BURY partitions. In Figure~\ref{fig:grid}, we compare the performance of METIS and BURY on grid graphs of sizes ranging from 25 to 1,600, and partitioning it into from 2 to 10 partitions.

    \begin{figure}
        \centering
        \includegraphics[width=\linewidth]{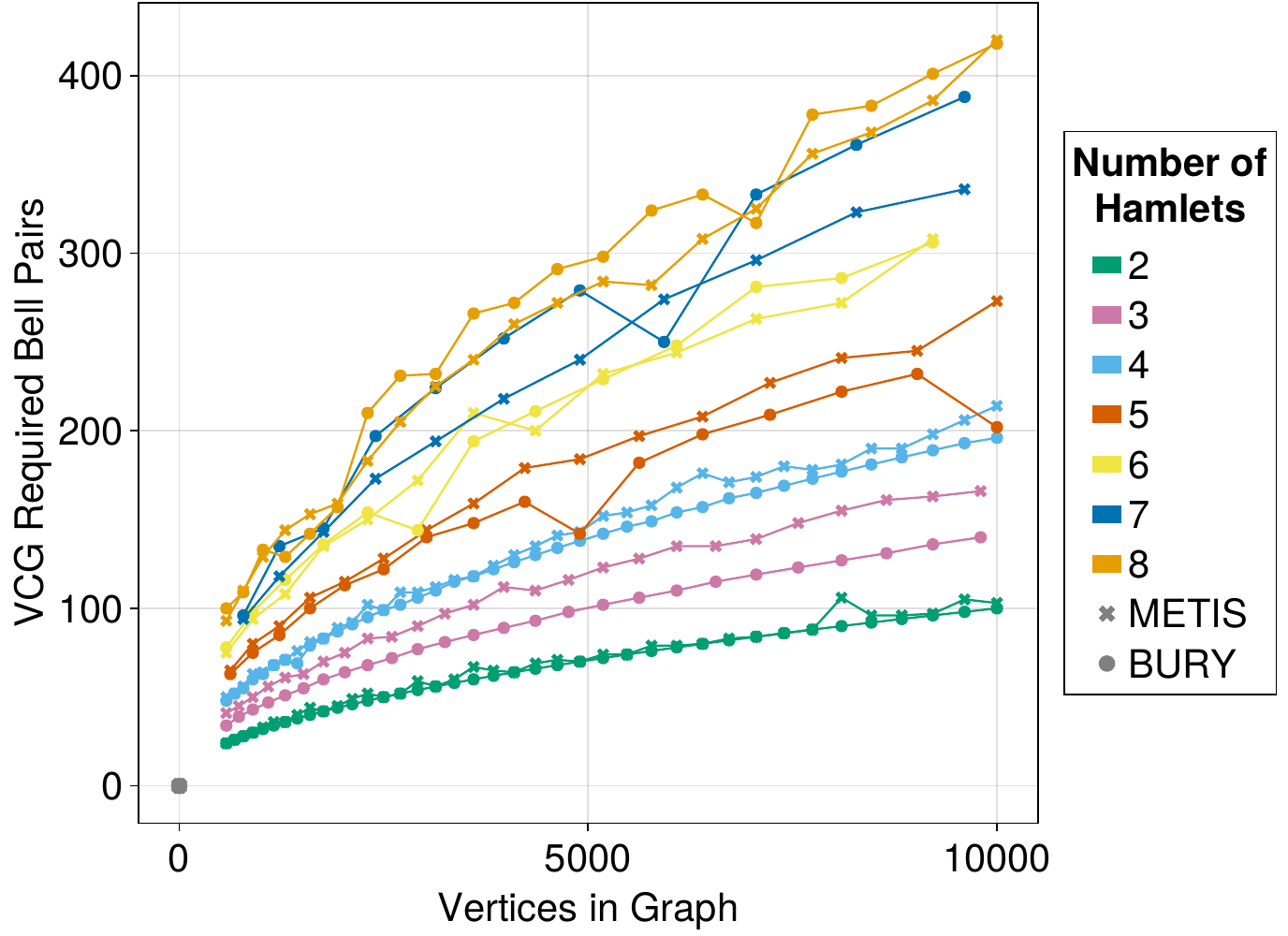}
        \caption{\textbf{Performance of METIS and BURY on square grid graphs} The $y$-axis corresponds to the number of Bell pairs required to generate the distributed graph state using our VCG protocol. The $x$-axis corresponds to the total number vertices in the graph being partitioned. The different colors correspond to increasing the number of hamlets and thus doing larger values of $k$ in maximum matching $k$-partition. For a given $x$ value, the number of vertices in the total graph is held constant when increasing the number of hamlets. In other words, more hamlets corresponds to fewer logical qubits per QPU.}
        \label{fig:grid}
    \end{figure}

    For most grid graph sizes and number of hamlets tested, BURY outperforms METIS in minimizing the number of required Bell pairs. However, METIS does occasionally outperform BURY on certain sizes of grid, for certain values of $k$.
    
    \item \textbf{Random regular graphs}
    Using the function defined in Graph.jl \cite{GraphsJL}, we test the methods described in Section~\ref{sec:methods} on various random regular graphs. Figure~\ref{fig:2ham_3reg} shows our BURY algorithm outperforming all but METIS on bipartitioning random 3-regular graphs. Figure~\ref{fig:2ham_6reg} shows BURY outperforming all methods including METIS on bipartitioning random 6-regular graphs, and the same is true for 4- and 5-regular graphs as well, however those plots are omitted for conciseness.  
    
\end{itemize}

\begin{figure}
    \centering
    \includegraphics[width=\linewidth]{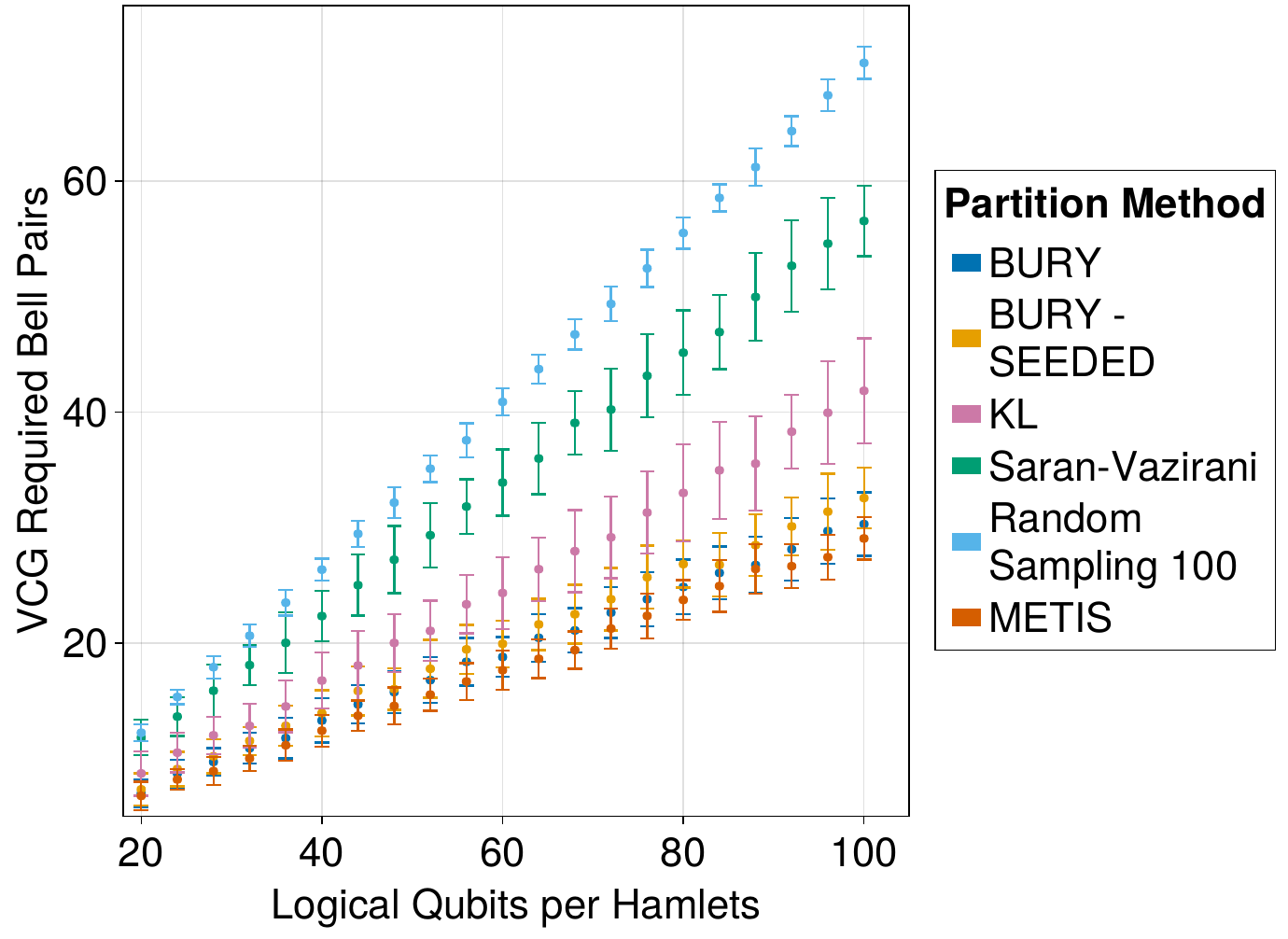}
    \caption{\textbf{Different methods performance on  bipartitioning random 3-regular graphs} The $y$-axis corresponds to the number of Bell pairs required to generate the distributed graph state using our VCG protocol. The $x$-axis corresponds to the number of logical qubits in a hamlet, so the number of total vertices is twice the $x$-axis value. Because we generate random graphs, we take the average performance of 50 random samples per data point. }
    \label{fig:2ham_3reg}
\end{figure}

\begin{figure}
    \centering
    \includegraphics[width=\linewidth]{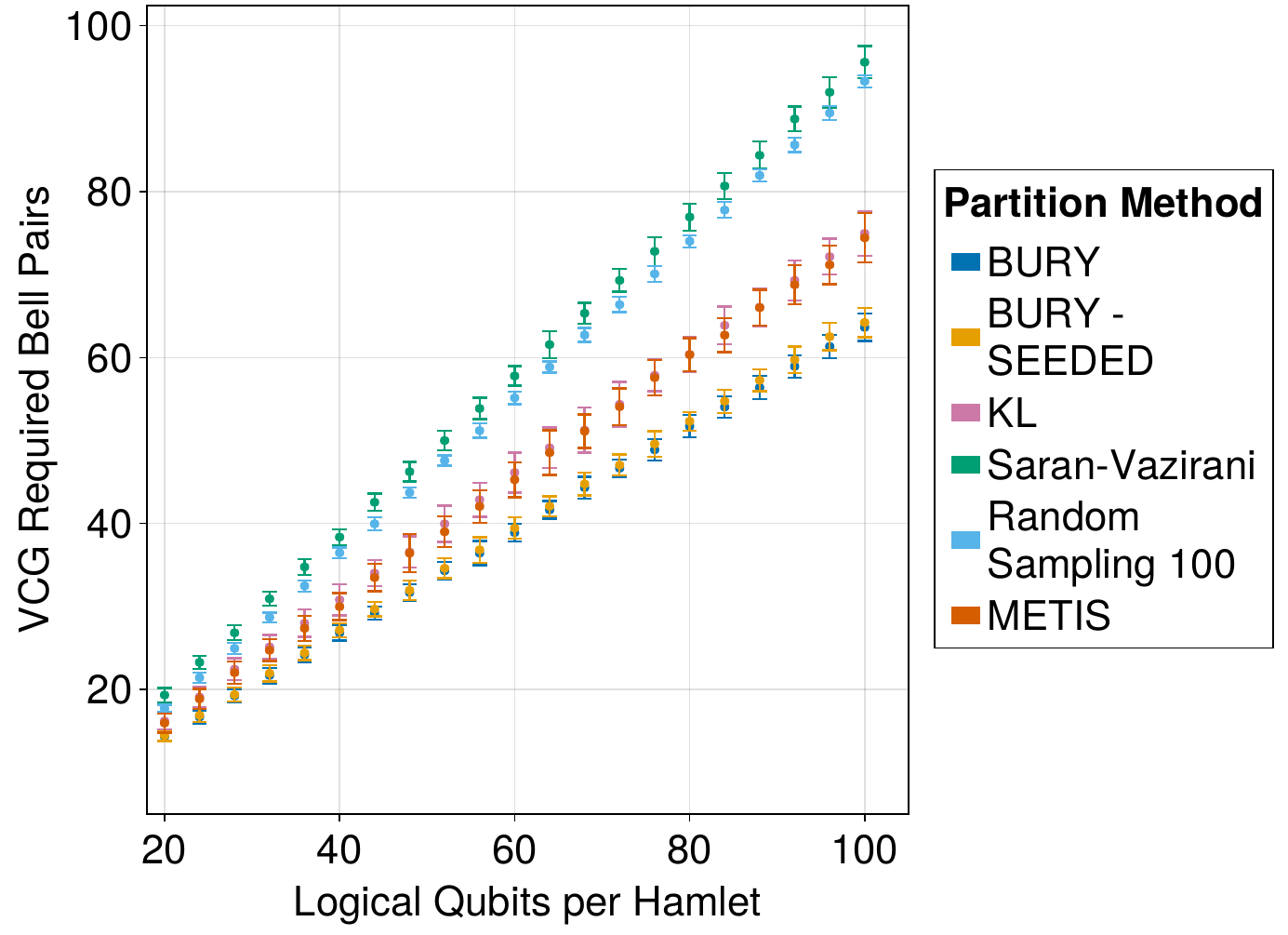}
    \caption{\textbf{Different methods performance on  bipartitioning random 6-regular graphs} Same as Figure~\ref{fig:2ham_3reg}, but for random 6-regular graphs instead of 3-regular.}
    \label{fig:2ham_6reg}
\end{figure}

\section{MBQC and Dynamic Compilation\label{sec:dynamic_compilation}}
So far, we have considered the problem of statically partitioning a large graph state over $k$ many hamlets. Here, statically refers to generating the entire graph state at once, utilizing all resources, and then starting to measure it after it has been generated in its entirety. Although this problem alone is relevant whenever a distributed graph state present at a single time is desired, this is strictly less practical in the MBQC model. Here we discuss how our techniques are also relevant and applicable to the dynamic compilation problem, that is, the problem where the graph state is measured during its generation, as was often mentioned  historically when considering cluster states for computation~\cite{one_way}. 

One element of MBQC that has been notably missing so far from our problem statement is consideration of the  measurement order of qubits in the graph state we want to generate. Although we leave true integration of concepts such as flow \cite{determinism_in_one_way} and generalized flow \cite{g_flow} for future work, we provide a few statements on how our work can naturally be adapted to the setting of measurement orders. For each vertex $v$ in the graph state, we consider $v$ to be in one of the $\Lambda$ disjoint sets $L_1,L_2,...,L_{\Lambda}$, and $\bigcup_{i=1}^\Lambda L_i = V$. The $l$th layer $L_l$ is a set of vertices that may be measured in any order with respect to one another. How these layers relate to each other under different assumptions and interpretations is discussed in the following subsections.

\subsection{Single layer case}
By definition, the case where we only have a single layer is exactly the static compilation problem. 

\subsection{Rigid layer case}
When interpreting layers in a stricter sense ("rigid layers"), then all $v\in L_l$ must be measured before any $u \in L_{l+1}$, and the graph-state defined by the subgraph induced by $G[L_l\cup L_{l+1}]$ must have been generated before measurement. In this interpretation, $G[L_1\cup L_{2}]$ needs to be generated first, then $L_1$ is measured, then $L_3$ is generated, then $L_2$ is measured, and so on. Generation of $G[L_1\cup L_{2}]$ is the homogeneous partition problem, assuming that $|L_1\cup L_{2}|>H$, where $H$ is the number of of villagers in one of the hamlets. If this does not hold, then $G[L_1\cup L_{2}]$ can fit in a single hamlet, and partitioning of the graph state is not needed.

After measuring $L_1$, $G[L_2\cup L_{3}]$ must be generated next. However, $L_2$ will have already been generated. Therefore, the problem assumption of homogeneous hamlets is no longer valid. Rather, subtracting the number of already generated nodes present in each hamlet from that hamlet's capacity gives the heterogeneous problem.

In the heterogeneous problem, hamlets may be differing sizes. While this work does not provide further analysis of this specific problem, we note that our BURY algorithm (Algorithm~\ref{alg:bury_heuristic}) will easily be able to adapt to this by setting $r$ equal to the remaining capacity of a given hamlet and pre-coloring the appropriate nodes in the weight table to reflect, in essence, a homogeneous problem instance of BURY that is starting mid-execution. We naively assume that a greedy approach of always trying to fill the hamlet with most vacancies will work well. 
 
\subsection{Non-rigid layer case\label{sec:non_rigid}}
The rigid way of viewing layers is naturally not optimal in terms of the spatial cost of the number of qubits required to do computation, as a vertex $v\in L_l$ may be measured as soon as all of its neighbors in $L_{l-1}$ have been measured, and all of its neighbors in $L_{l+1}$ have been generated. However, the rigid case always obeys the constraints of the non-rigid case, and the rigid case provides a framework where our static partition algorithms can be applied directly. Further work on this distributed dynamic compilation in this non-rigid measurement schedule regime we believe to be of great interest for future work. Recently, approximating the spatial cost of a graph was proven to be NP-hard \cite{Elman_2025}, implying that there is much work to be done on heuristic methods and for specific cases of graphs. 
 
\section{Conclusion\label{sec:conclusion}}
We have explored the problem of graph partitioning with matching minimization in mind rather than edge minimization, and designed a heuristic algorithm, BURY, to address it. Our method far outperforms existing edge minimization algorithms for the purposes of minimizing the amount of entanglement between graph partitions of a graph state both in terms of matching sizes and cut rank. Furthermore, we provided an explicit protocol for generating a given partition graph state for a number of inter-QPU Bell pairs equal to the matching sizes. Moreover, we hope that this work leads to many more applications of the BURY heuristic framework to more nuanced situations. Such as adapting BURY to constrained networks, as this work assumes all-to-all connectivity in the quantum network.

Other potential future work that would be of great use would be, as mentioned in Section~\ref{sec:non_rigid}, further design and analysis of schedulers, as well as compilers, for the dynamic case of graph state partitioning with arbitrary measurement schedules. In addition a study of the graphical properties of the compiled graph states for various well-known quantum algorithms and quantum algorithmic primitives would be helpful for design of better heuristic methods and schedulers. More efficient protocols than VCG for generating graph states and their corresponding compilation algorithms would also be very fruitful. Since METIS performed so well, modification of METIS and other graph coarsening based graph partitioning algorithms might prove fruitful for graph state partitioning.

\paragraph*{Code availability:} The underlying simulator is available as an open source package in the Julia ecosystem: QuantumHamlets.jl~\cite{micciche_2026_18459303}. The goal of this package is to allow users to test various algorithms for partitioning graphs across the multiple metrics discussed in this paper. Naturally, an implementation of BURY is also provided. 

\begin{acknowledgements}
We would like to thank Simon Devitt for discussion related to graph state compilation. We also thank Kenneth Goodenough and Guus Avis for fruitful discussions concerning previous work in graph state generation in quantum networks. This work was supported by the JST Moonshot R\&D program under Grant JPMJMS226C and by NSF grants 1941583, 2346089, 2402861, 2522101, 2521579. 
\end{acknowledgements}

\bibliography{bibliography}

\appendix

\section*{Appendix}
\subsection{Performance on Cut Rank\label{sec:cutrank}}
Here we show provide plots and discuss how our BURY algorithm also outperforms existing methods for $k$ partition when evaluated on the metric of cut rank.

First, to define cut rank, we consider a cut as a subset of edges $C\subseteq E$ of some graph $G = (V,E)$. Then, we consider the adjacency matrix, $A$, representing edge-induced subgraph $H = (V', C)$, where $V' \subseteq V$ is the set of vertices incident to the edges in $C$. $H$ is naturally bipartite from this construction, so the adjacency matrix will have the following form:
$$A = 
\begin{pmatrix}
0 & B\\
B^T & 0
\end{pmatrix}$$

Where $B$ is the biadjacency matrix. We call the rank of $B$ over the field of binary numbers the \textit{cut rank} of the partition. As mentioned in Hein et al. \cite{hein2006}, the rank of this matrix is a tighter upper bound on the entanglement entropy across a cut, and is likely the best one can hope for in terms of the number of Bell pairs required to generate a graph state, regardless of how clever the protocol. Furthermore, for the case of free local CZ gates, and in the case of $k=2$, a generation protocol exists based on Gaussian elimination to generate a graph state with a number of Bell pairs equal to the cut rank by Theorem 1 of Fattal et al.\cite{fattal2004entanglementstabilizerformalism}. Our BURY algorithm minimizing cut rank as well as maximum matchings is evidence that BURY is a good algorithm for graph state partitioning regardless of a  graph state generation method used.

We show in Figures~\ref{fig:qaoa_cutrank},~\ref{fig:grid_cutrank},~\ref{fig:2ham_3reg_cutrank}, and~\ref{fig:2ham_6reg_cutrank} that BURY performs just as well at minimizing cut rank as it does for minimizing matching sizes between partitions, as described in Section~\ref{sec:results}. 

    \begin{figure*}
        \centering
        \includegraphics[scale=0.55]{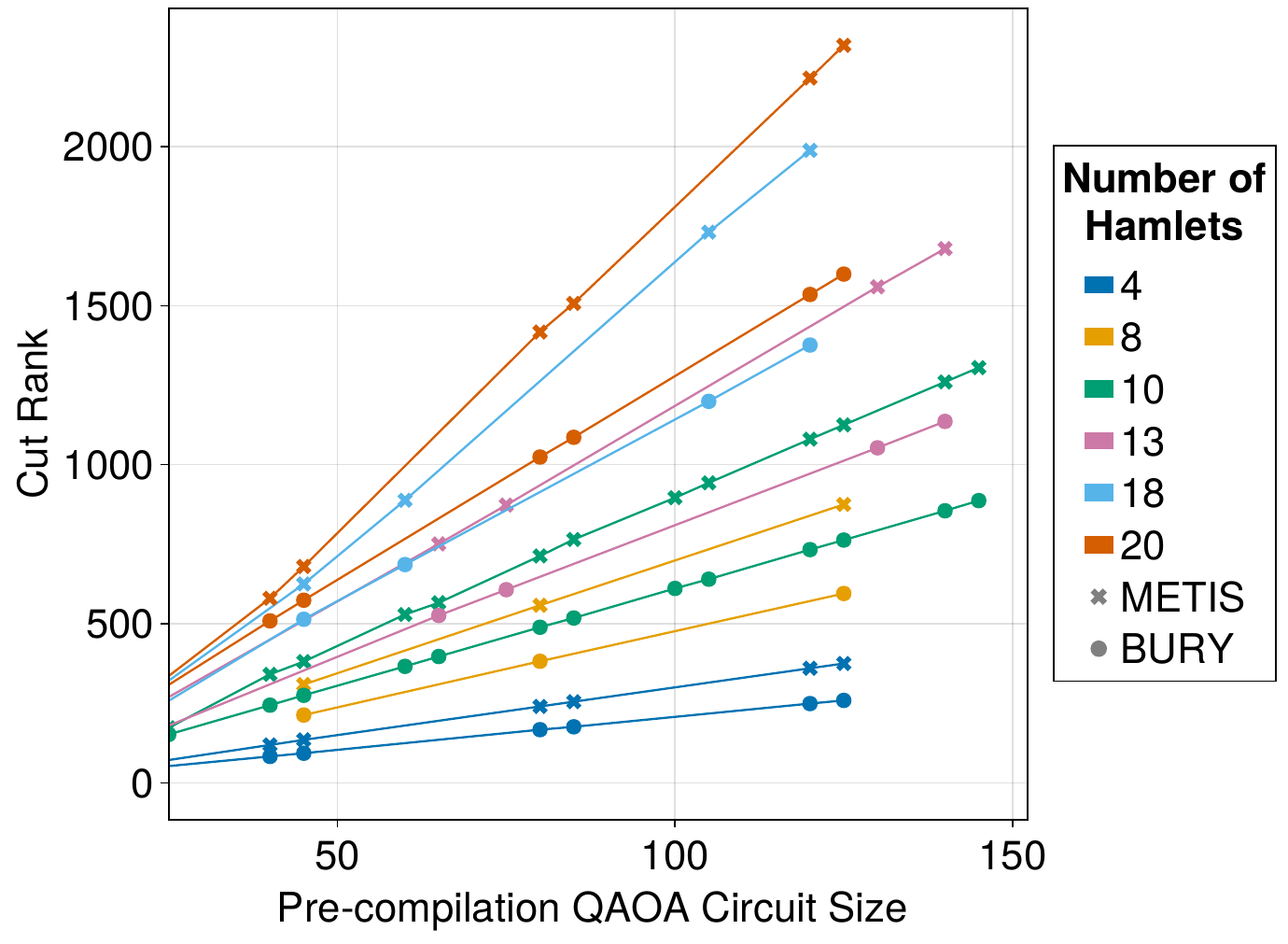}
        \caption{\textbf{Performance of METIS and BURY on compiled QAOA graph states} The $y$-axis corresponds to the sum of the ranks of the biadjacency matrices corresponding to each cut in the $k$ partition. The $x$-axis corresponds to the size of the QAOA circuit the MBQC graph state was compiled from. The colors correspond to varying the number of hamlets while the keeping the number of total logical qubits across all hamlets constant for a given $x$-value. The marker type distinguishes our BURY heuristic from the METIS algorithm.}
        \label{fig:qaoa_cutrank}
    \end{figure*}

    \begin{figure*}
        \centering
        \includegraphics[scale=0.55]{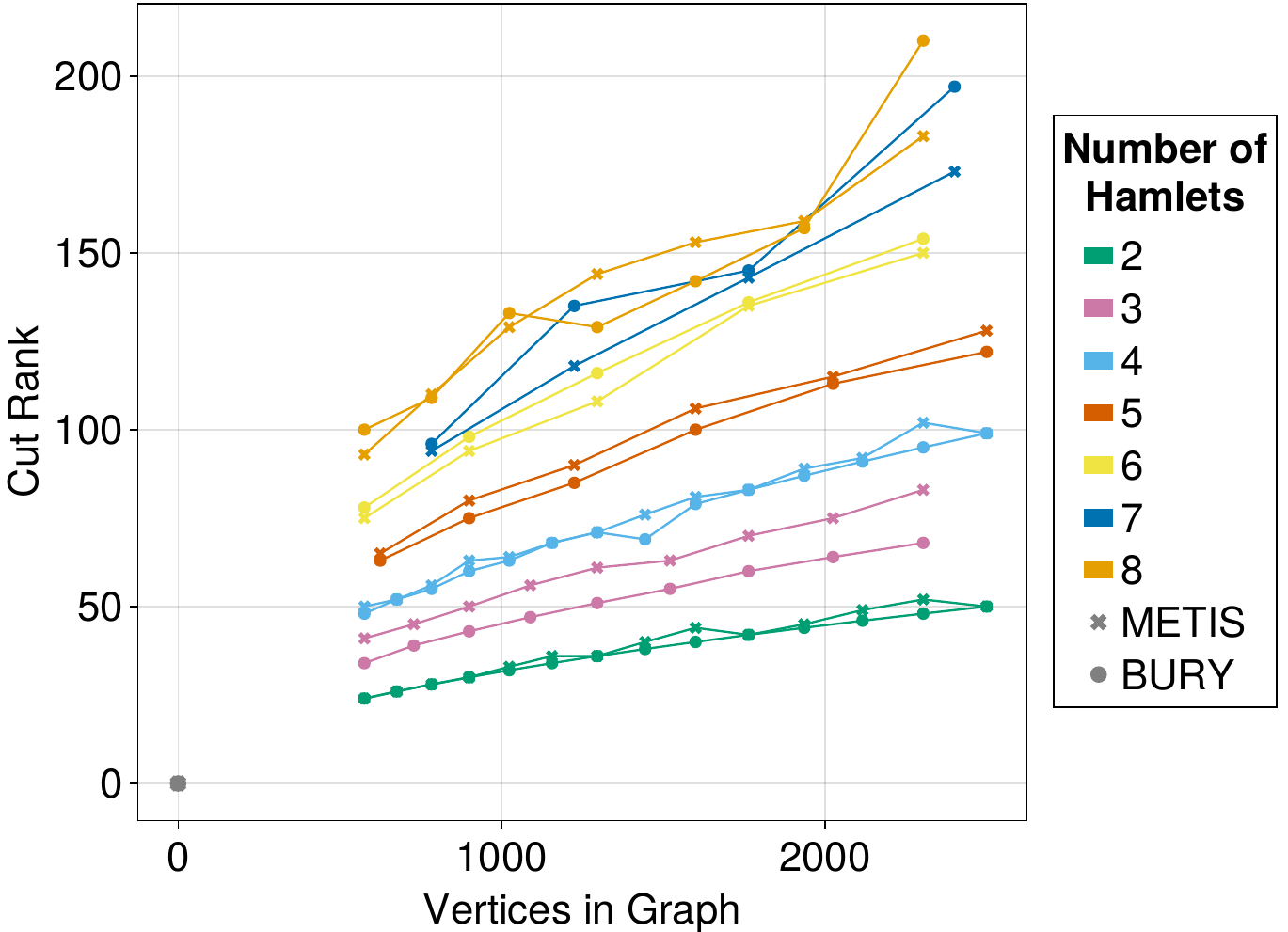}
        \caption{\textbf{Performance of METIS and BURY on square grid graphs} The $y$-axis corresponds to the sum of the ranks of the biadjacency matrices corresponding to each cut in the $k$ partition. The $x$-axis corresponds to the total number vertices in the graph being partitioned. The different colors correspond to increasing the number of hamlets and thus doing larger values of $k$ in maximum matching $k$-partition. For a given $x$ value, the number of vertices in the total graph is held constant when increasing the number of hamlets. In other words, more hamlets corresponds to fewer logical qubits per QPU.}
        \label{fig:grid_cutrank}
    \end{figure*}

    \begin{figure*}
    \centering
    \includegraphics[scale=0.55]{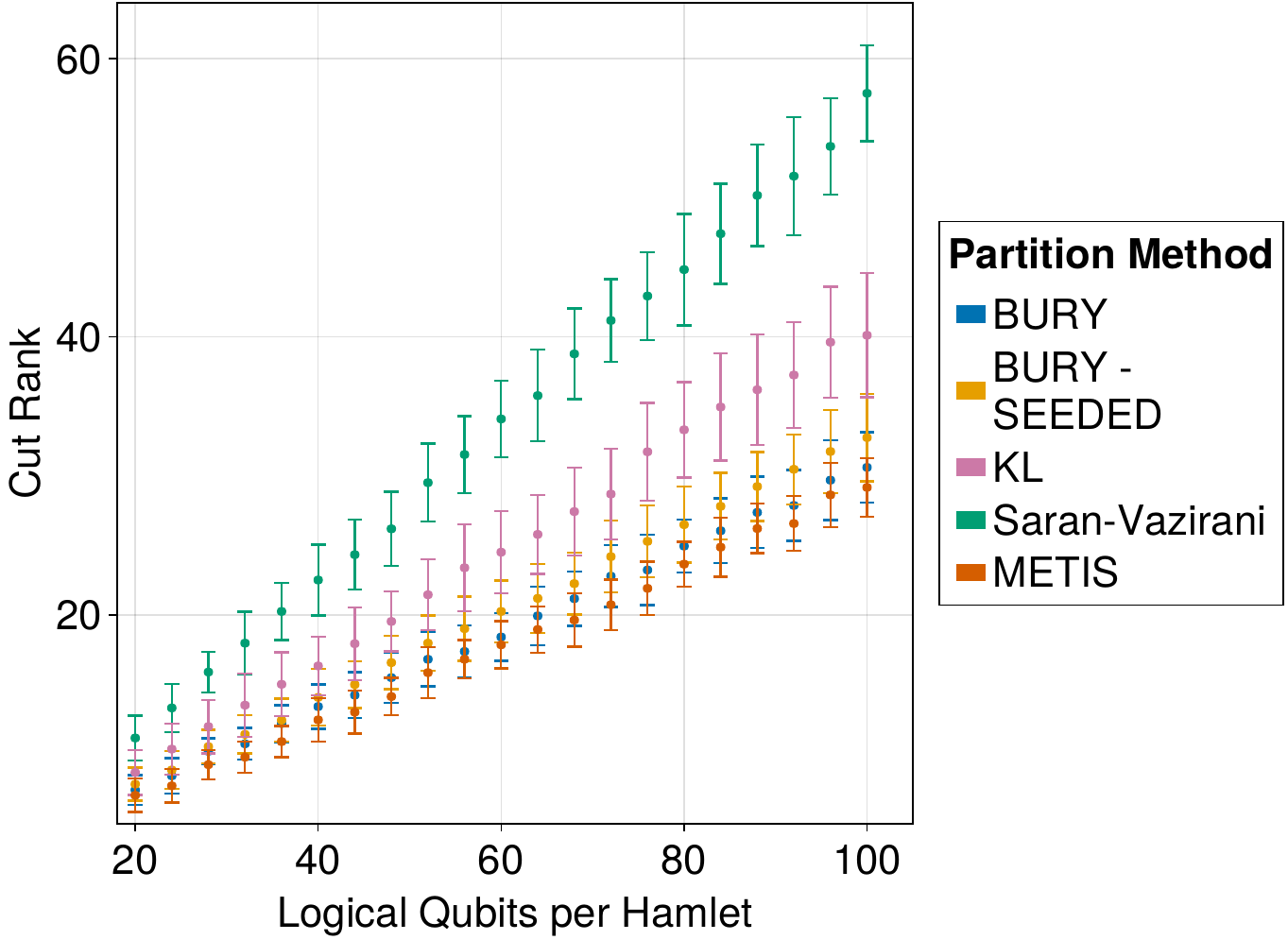}
    \caption{\textbf{Different methods performance on  bipartitioning random 3-regular graphs} The $y$-axis corresponds to the sum of the ranks of the biadjacency matrices corresponding to each cut in the $k$ partition. The $x$-axis corresponds to the number of logical qubits in a hamlet, so the number of total vertices is twice the $x$-axis value. Because we generate random graphs, we take the average performance of 50 random samples per data point. }
    \label{fig:2ham_3reg_cutrank}
\end{figure*}

\begin{figure*}
    \centering
    \includegraphics[scale=0.55]{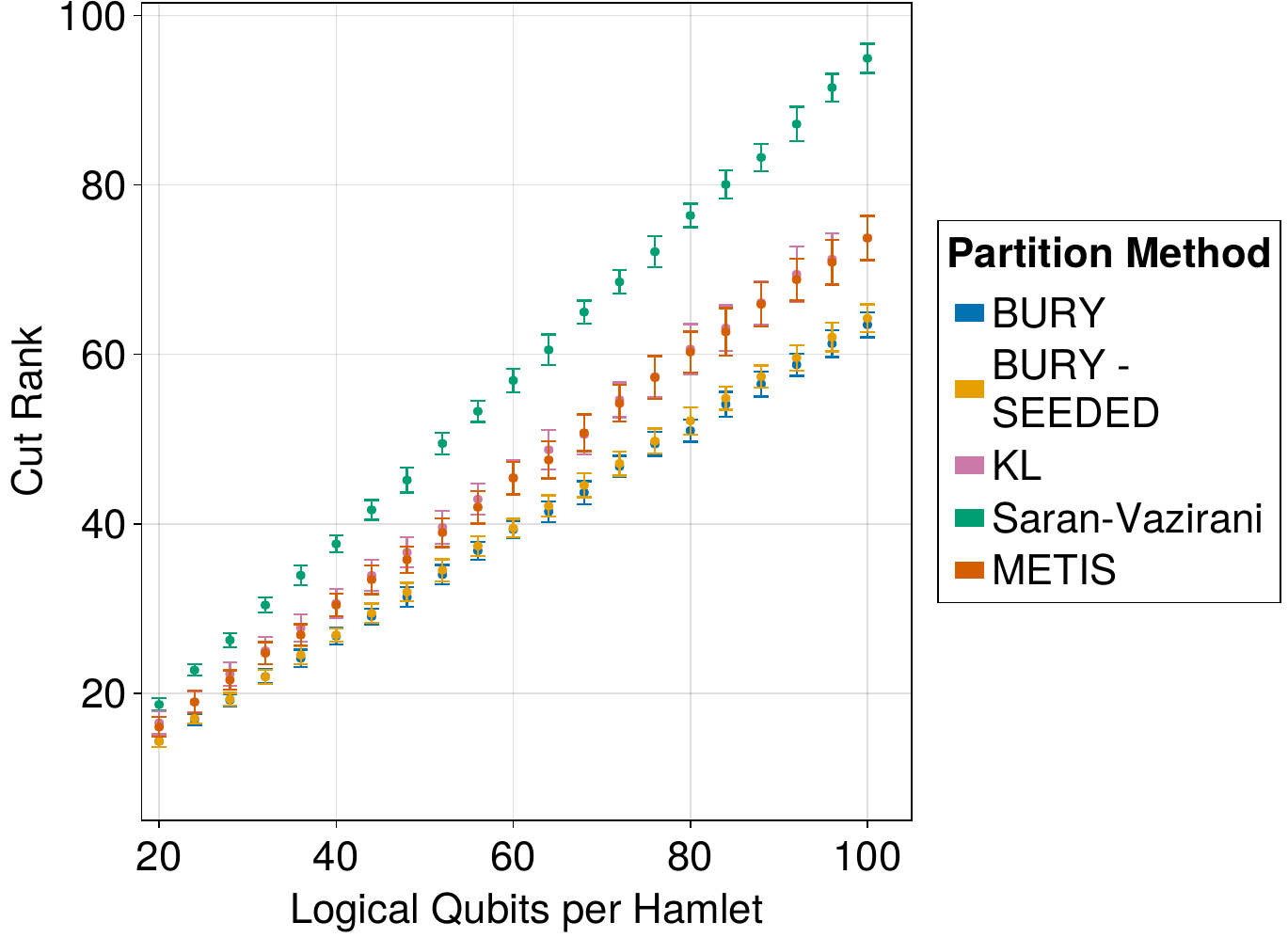}
    \caption{\textbf{Different methods performance on  bipartitioning random 3-regular graphs} The $y$-axis corresponds to the sum of the ranks of the biadjacency matrices corresponding to each cut in the $k$ partition. The $x$-axis corresponds to the number of logical qubits in a hamlet, so the number of total vertices is twice the $x$-axis value. Because we generate random graphs, we take the average performance of 50 random samples per data point. }
    \label{fig:2ham_6reg_cutrank}
\end{figure*}

\subsection{NP Completeness\label{sec:npcompleteness}}

\newcommand{\MM}{\nu}

We consider the following three decision problems:
\begin{enumerate}
\item {\em Minimum Balanced Cut-Matching (BCM).}
Given an undirected graph $G=(V,E)$ integer $k$, does there exist a partition $V=A\cup B$ such that $|A|=|B| =|V|/2$ and $\MM(E(A,B)) \leq k$
where $E(A,B)$ are the edges crossing $(A,B)$ and $\nu(\cdot)$ denotes maximum matching size.
\item {\em Minimum Balanced Vertex Separator (BVS).} 
Given an undirected graph $G=(V,E)$ integer $k$, 
does there exist a partition $V=X\cup Y\cup U$ such that $|U|\leq k$, $\max(|X|,|Y|)\leq |V|/2$ and there are no edges between $X$ and $Y$.
\item {\em Minimum Balanced Cut (BC).} Given an undirected graph $G=(V,E)$ and integer $k$, does there exists a partition $V=A\cup B$ with $|A|=|B|=|V|/2$ such that $|E(A,B)|\leq k$.
\end{enumerate}

\begin{lemma}
$(G,k)$ is a ``yes'' instance of BCM iff $(G,k)$ is a ``yes'' instance of BVS.
\end{lemma}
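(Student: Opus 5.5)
The plan is to prove both directions via Kőnig's theorem, which is already invoked in Section~\ref{sec:protocol}. The cut graph $H=(V,E(A,B))$ is bipartite, so $\MM(E(A,B))$ equals the size of a minimum vertex cover of $E(A,B)$. Both directions then become translations between a small vertex cover of the crossing edges and a small separator.

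($\Rightarrow$) Suppose $|A|=|B|=|V|/2$ and $\MM(E(A,B))\le k$. By Kőnig's theorem there is a set $U$ with $|U|\le k$ that meets every edge of $E(A,B)$. Set $X=A\setminus U$ and $Y=B\setminus U$. Any edge between $X$ and $Y$ would be a crossing edge missing $U$, which is impossible, so there are no $X$--$Y$ edges. Moreover $|X|\le|A|=|V|/2$ and $|Y|\le|B|=|V|/2$. Hence $(X,Y,U)$ witnesses that $(G,k)$ is a yes instance of BVS.

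($\Leftarrow$) Suppose $V=X\cup Y\cup U$ with $|U|\le k$, $\max(|X|,|Y|)\le|V|/2$, and no $X$--$Y$ edges. We need to distribute the vertices of $U$ between the two sides so that both sides have size exactly $|V|/2$. Since $|X|\le|V|/2$ and $|X|+|U|\ge |V|-|Y|\ge|V|/2$, we can choose $U_1\subseteq U$ with $|X|+|U_1|=|V|/2$. Let $A=X\cup U_1$ and $B=Y\cup(U\setminus U_1)$, so that $|A|=|B|=|V|/2$. Every crossing edge has an endpoint in $U$, because the only edges between $A$ and $B$ that avoid $U$ would go between $X$ and $Y$, and there are none. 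Thus $U$ is a vertex cover of $E(A,B)$. Since every matching needs a distinct cover vertex per edge, $\MM(E(A,B))\le|U|\le k$.

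The only delicate point is the counting step in ($\Leftarrow$) that guarantees the rebalancing $U_1$ exists; it follows from the inequality above. We also implicitly assume $|V|$ is even, as BCM requires. The (trivial) direction of Kőnig's theorem, matching $\le$ cover, suffices for ($\Leftarrow$). The nontrivial direction is needed only in ($\Rightarrow$), where we must produce a cover of size $\MM(E(A,B))$; this is exactly where bipartiteness of the cut graph is used.
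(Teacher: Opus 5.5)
Your proposal is correct and follows essentially the same route as the paper. In one direction, K\H{o}nig's theorem turns a small cut-matching into a cover $U$, and you set $X=A\setminus U$, $Y=B\setminus U$. In the other, $U$ is split to rebalance $X$ and $Y$ into $A$ and $B$, and then serves as a vertex cover of $E(A,B)$ that bounds $\nu(E(A,B))\le |U|\le k$.

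Your counting argument for the existence of $U_1$ (using $|X|+|U|=|V|-|Y|\ge |V|/2$) fills in a step the paper only asserts. Your remark that $|V|$ must be even is also a real condition that the paper leaves implicit.
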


\begin{proof}
Suppose $(G,k)$ is a ``yes'' instance of BVS. Then there exists partition $V=X\cup Y\cup U$ with no edges between $X$ and $Y$, $|X|,|Y|\le |V|/2$ and $|U|\leq k$. Partition $U$ into $U_1\cup U_2$ such that $A:=X\cup U_1$ and $B:=Y\cup U_2$ both have size $|V|/2$. Next, observe that $\MM(E(A,B))\leq |U_1|+|U_2|= |U|\leq k$ because there are no edges between $X$ and $Y$. Hence $(G,k)$ is a ``yes'' instance of BCM as required.

Now suppose $(G,k)$ is a ``yes'' instance of BCM. Then there exists a partition $V=A\cup B$ with $|A|=|B|=n/2$ such that $\MM(E(A,B))\leq k$. By K\H{o}nig's theorem, the minimum vertex cover of a bipartite graph has the same size as the maximum matching. Hence, there exists $C\subset V$ with $|C|\leq k$ such that every edge in $E(A,B)$ has at least one endpoint in $C$. Setting $X=A\setminus C$ and $Y=B\setminus C$ ensures the partition $V=X \cup Y \cup C$ has the necessary properties to ensure $(G,k)$ is a ``yes'' instance of BVS.
\end{proof}

Therefore the NP-hardness of BCM would follow from the NP-hardness of BVS. The NP-hardness of  BVS is implied by the NP-hardness of BC \cite{GAREY1976237}. While this last step of the argument is fairly standard (to the extent that some authors just cite work that establishes BC is NP-hard without being explicit about the reduction), we include a reduction from BC to  BVS for completeness. Specifically, given an instance $(G,k)$ of BC we construct an instance $(H,k)$ of BVS where $H$ is a graph constructed as follows:
\begin{itemize}
\item For each $v_i\in G$, generate a set of nodes $V_i=\{u_i^1, \ldots, u_i^t\}$ where $t=3m+k+1$.
\item The nodes of $H$ are $V_1\cup \ldots \cup V_n \cup \{w_{i,j}:\{v_i,v_j\} \in G\}$ and $2m$ isolated nodes. For each $i\in [n]$, include all edges with each $V_i$, i.e., the induced subgraph $H[V_i]$ is a clique for all $i$. Also, for each $\{v_i,v_j\} \in G$, add edges between $w_{i,j}$ and every node in $V_i\cup V_j.$
\end{itemize}

Then if $(G,k)$ is a ``yes'' instance of BC, then consider the balanced partition $(A,B)$ where $|E(A,B)|\leq k$. Let 
\begin{eqnarray*}
U &=& \{w_{i,j}:\{v_i,v_j\}\in E(A,B)\} \\ 
X &=& \left (\cup_{v_i\in A}V_i \right )\cup \{w_{i,j}:\{v_i,v_j\}\in E, v_i,v_j\in A\}\\ 
Y &=& \left (\cup_{v_i\in B}V_i \right )\cup \{w_{i,j}:\{v_i,v_j\}\in E, v_i,v_j\in B\}
\end{eqnarray*}
Then $|U|\leq k$, $|X|=tn/2+m_A$, and $|Y|=tn/2+m_B$ where $m_A$ and $m_B$ are the number of edges in $G[A]$ and $G[B]$ respectively. Distributing $\lfloor m-m_A/2+m_B/2\rfloor $ isolated nodes to $X$ and $ \lceil m -m_B/2+m_A/2 \rceil$ isolated nodes to $Y$, ensures 
\begin{eqnarray*}
\max(|X|,|Y|) & \leq & tn/2+m+m_A/2+m_B/2  \\ 
& \leq & (tn+3m)/2=|V(H)|/2
\end{eqnarray*}
 while ensuring there are no edges between $X$ and $Y$ in $H$. Hence, $(H,k)$ is a ``yes'' instance. 

Conversely if $(H,k)$ is a ``yes'' instance of BVS, let $V(H)=X \cup Y \cup U$ be a partition of $V(H)$ with $|X|,|Y|\leq |V(H)|/2= tn/2+3m/2$, $|U|\leq k$, and no edges between $X$ and $Y$. Let $V'_i=V_i\setminus U$. Note that since $t>k$, no $V'_i$ is empty and must be entirely contained in $X$ or $Y$ since $G[V'_i]$ is a clique and there are no edges between $X$ and $Y$. Let \[A=\{i\in [n]:V_i' \subseteq X\} ~~~\mbox{ and }~~~ B=\{i\in [n]:V_i' \subseteq Y\} \ .\] Note $t |A|-k \leq |\cup_{i\in A} V_i'|$ and $t |B|-k \leq |\cup_{i\in B} V_i'|$, and hence \[\max(t |A|-k , t|B|-k) \leq tn/2+3m/2\] and so \[\max(|A|,|B|) \leq n/2+(3m/2+k)/t \ .\] Since $t>(3m+k)$ and $|A|+|B|=n$, we deduce $|A|=|B|=n/2$.

 Consider any edge $\{v_i,v_j\}\in E(G)$ with $i\in A$ and $j\in B$. The corresponding vertex $w_{i,j}$ is adjacent to every vertex of $V_i\cup V_j$, and in particular to every vertex of the nonempty sets $V_i'\subseteq X$ and $V_j'\subseteq Y$. Therefore $w_{i,j}$ cannot lie in $X$ (or else it would have an edge to $Y$), and it cannot lie in $Y$ (or else it would have an edge to $X$). Thus $w_{i,j}\in U$.
So every edge of $G$ crossing $(A,B)$ corresponds to a distinct vertex of $U$, implying
\[
|E_G(A,B)| \le |U| \le k.
\]
Therefore $(A,B)$ is a balanced cut of $G$ of size at most $k$, and so $(G,k)$ is a ``yes'' instance of BC.

\end{document}